\documentclass[11pt]{article}

\usepackage{amsthm}

\newtheorem{theorem}{Theorem}
\newtheorem{lemma}[theorem]{Lemma}
\newtheorem{corollary}[theorem]{Corollary}

\theoremstyle{remark}
\newtheorem{remark}[theorem]{Remark}

\usepackage{authblk}
\usepackage{graphicx,physics,amsmath,amssymb,graphics,setspace,float} 
\usepackage[colorlinks=true, linkcolor=blue, citecolor=blue, linktocpage=true]{hyperref}
\usepackage[margin=1in,letterpaper]{geometry}
\usepackage[numbers, sort&compress]{natbib}
\newcommand{\ZZ}[0]{\mathbb{Z}}
\usepackage{simpler-wick}
\newcommand{\szego}[0]{Szeg\H{o} }
\newcommand{\p}[0]{P_{\Omega}}

\newcommand{\q}[0]{P_\Omega^\perp}
\title{An elementary approach to the \szego theorems}
\author[1]{\small Israel Klich}
\author[1]{\small Ravisankar Rajagopal}
\affil{\textit{\small Department of Physics, University of Virginia,
Charlottesville, Virginia 22904-4714, USA}}
\begin{document}
\date{}

\maketitle
\begin{abstract}
\noindent
The strong \szego theorem and the Fisher-Hartwig formula describe the
asymptotic behavior of determinants obtained by restricting
translationally invariant operators to large finite regions. In particular, Szeg\H{o}'s theorem states that, beyond the leading extensive contribution to such a determinant, there is a universal finite correction,
$\sum_{r>0} rV_rV_{-r}$
where $V_r$ are the Fourier coefficients of the logarithm of the symbol. Although this formula appears throughout analysis and mathematical physics, the origin of its quadratic form and of the weight $r$ is often obscured in its standard analytic derivations. We develop an elementary picture in which the strong \szego term arises from \emph{boundary leakage}: paths contributing to the matrix exponential leave the finite interval and subsequently return. Counting the shortest such excursions immediately produces the factor $r$. We then show that an isolated boundary contributes exactly one half of the strong \szego correction, while the two boundaries of a long interval decouple. For finite-range matrices this gives a rigorous and intuitive elementary derivation of the asymptotics, with extensions to sufficiently rapidly decaying interactions. A complementary free-fermion argument provides a physical interpretation of the same cancellation mechanism. Numerical examples illustrate the asymtotics in the well controlled regime, the Fisher Hartwig logarithmic correction. The boundary formulation does not itself require translation invariance and therefore suggests natural extensions to disordered systems, higher dimensions, and symbols with internal matrix structure. To illustrate this we numerically go beyond the formal applicability region of the \szego formula, considering slow decay and disordered cases, and showing in some cases a robustness of the approach even for slowly decaying but disordered potentials. 
\end{abstract}

\section{Introduction}
\label{sec:introduction}

A Toeplitz matrix is a matrix whose entries depend only on the
difference between their indices,
\begin{equation}
    M_{jk}=c_{j-k}.
\end{equation}
The coefficients $c_r$ may be regarded as the Fourier coefficients of
a function $f$ on the unit circle,
\begin{equation}
    c_r
    =
    \frac{1}{2\pi}
    \int_0^{2\pi}
    f(e^{i\theta})e^{-ir\theta}\,d\theta.
\end{equation}
$f$ is referred to as the ``symbol" and the $N\times N$ Toeplitz matrix associated with $f$ is denoted by
$T_N(f)$. We are interested in its determinant
\begin{equation}
    D_N(f):=\det T_N(f).
    \label{eq:toeplitz-determinant-definition}
\end{equation}
To connect the symbol with the operator notation used below, write
\begin{equation}
V(e^{i\theta})=\log f(e^{i\theta})
=
\sum_{r\in\mathbb Z}V_r e^{ir\theta},
\label{eq:log-symbol}
\end{equation}
where a branch of the logarithm has been fixed.  There is a corresponding translationally invariant operator 
(Laurent operator on $\ell^2(\mathbb Z)$):
\begin{equation}
L(V)=\sum_{r\in\mathbb Z}V_rS^r,
\qquad
[L(V)]_{nm}=V_{n-m},
\label{eq:Laurent-operator-definition}
\end{equation}
where $S|j\rangle=|j+1\rangle$.  Under Fourier transformation,
$L(V)$ becomes multiplication by the function $V(e^{i\theta})$.
Consequently, the operator functional calculus gives
\begin{equation}
e^{L(V)}=L(e^V)=L(f).
\label{eq:Laurent-exponential}
\end{equation}
We henceforth suppress the symbol $L$ and write simply $V$ for
$L(V)$.  Thus
\begin{equation}
T_N(f)=P_Ne^VP_N,
\qquad
D_N(f)=\det\!\left(P_Ne^VP_N\right).
\label{eq:Toeplitz-as-compressed-exponential}
\end{equation}
The central point is that cutting to the finite interval and taking
the exponential do not commute:
\begin{equation}
P_Ne^VP_N\neq e^{P_NVP_N}
\label{eq:cutting-exponentiating-noncommute}
\end{equation}
in general.  The difference between these two operations is precisely
the boundary leakage studied below.

Toeplitz determinants arise naturally whenever a translationally
invariant problem is observed only within a finite region. In
statistical mechanics they appear in partition functions and
correlation functions of solvable lattice models, while in quantum
physics they occur in free-fermion overlaps, full counting statistics,
and calculations of entanglement entropy. They also play a central
role in orthogonal polynomials, random matrix theory, and spectral
analysis
\cite{szego1952,boettcher1990,deift1999}. 

The Szeg\H{o} limit theorems describe the large-$N$ behavior of
$D_N(f)$. The first, or weak, Szeg\H{o} theorem determines the
extensive contribution:
Let $f$ be a positive function on the unit circle such that
$\log f\in L^1(\mathbb T)$. Then the weak \szego theorem states,
\begin{equation}
    \lim_{N\to\infty}
    \frac{1}{N}\log D_N(f)
    =
    \frac{1}{2\pi}
    \int_0^{2\pi}
    \log f(e^{i\theta})\,d\theta.
    \label{eq:weak-szego}
\end{equation}
This leading result goes back to Szeg\H{o}'s original work
\cite{szego1915}.

To state the strong theorem, write
\begin{equation}
    v(\theta)=\log f(e^{i\theta}),
    \qquad
    V_r=
    \frac{1}{2\pi}
    \int_0^{2\pi}
    v(\theta)e^{-ir\theta}\,d\theta.
    \label{eq:fourier-log-symbol}
\end{equation}
Under the Szeg\H{o}-Krein condition
\begin{equation}
    \sum_{r\in\mathbb Z}|r|\,|V_r|^2<\infty,
    \label{eq:szego-krein-condition}
\end{equation}
the strong Szeg\H{o} theorem states that
\begin{equation}
    \log D_N(f) =
    NV_0+\sum_{r=1}^{\infty}rV_rV_{-r}+o(1).
    \label{thm:szego1}
\end{equation}
For a positive real symbol, $V_{-r}=\overline{V_r}$, but it is useful
to retain the more general product $V_rV_{-r}$. Rigorous versions and
extensions of this result were developed over many decades
\cite{szego1952,widom1973,boettcher1990,simon2011}. 

When the symbol is not smooth, the strong \szego formula generally has to be replaced by the Fisher-Hartwig asymptotics \cite{FisherHartwig1969,HartwigFisher1969}.  The Fisher-Hartwig formula applies to symbols containing root singularities or phase jumps.  A singularity with parameters $\alpha_j$ and $\beta_j$ produces a logarithmic contribution
$(\alpha_j^2-\beta_j^2)\log N$ to $\log D_N(f)$
\cite{boettcherSilbermann1985,deift2011asymptotics}.
Such singularities arise naturally in gapless free-fermion systems, where the sharp Fermi surface produces jump discontinuities in the occupation-number symbol.  The Fisher-Hartwig formula has found particularly important applications in quantum-information and many-body physics.  It was first used to obtain the asymptotic entanglement entropy of blocks in free-fermion and quantum spin chains \cite{jin2004,its2005entanglement,keatingMezzadri2004}, and was subsequently applied to R\'enyi and single-copy entanglement, finite-size corrections, and boundary effects \cite{eisertCramer2005,eisertOsborne2006,franchiniItsKorepin2008,itsKorepin2009,calabreseEssler2010,fagottiCalabrese2011}.  Closely related Toeplitz determinants also govern fermionic full counting statistics \cite{abanovIvanovQian2011}.  More recently, extensions of these ideas to discontinuous block-Toeplitz symbols have been used to study entanglement in long-range fermionic chains and symmetry-resolved entanglement \cite{aresEtAl2018,aresMurcianoCalabrese2022}.  Long-range systems can also produce symbols lying outside the domains of both the strong \szego theorem and the conventional Fisher-Hartwig formula, leading, for example, to sublogarithmic entanglement growth \cite{aresEtAl2019}.  In higher dimensions, the corresponding role is played by Widom-type asymptotics, which determine the logarithmic violation of the fermionic entanglement area law \cite{gioevKlich2006,calabrese2012,leschkeSobolevSpitzer2014}.

The leading term in Eq.~\eqref{thm:szego1} has a simple interpretation:
Far from the ends of the interval, the system is effectively
translation invariant and can be diagonalized by plane waves where $\theta$ is the momentum. The
logarithm of the determinant is then the sum of the logarithms of the
corresponding eigenvalues, giving the bulk average $NV_0$.

The strong Szeg\H{o} term is considerably less transparent. Standard
proofs establish it through operator factorizations and asymptotic
identities, but these methods do not immediately reveal the mechanism
behind the formula. Why is the correction quadratic in the Fourier
coefficients of $\log f$? Why does the coefficient $V_rV_{-r}$ carry
the particular weight $r$? Why are there no cubic or higher-order
terms? Despite the maturity of the rigorous theory, a simple real-space
and physical explanation of these features is not usually emphasized.

It is also important that the relevant notion of range is the range
of $\log f$, rather than that of $f$ itself. A Toeplitz matrix
constructed from $f$ may be banded, while the Fourier coefficients
$V_r$ of $\log f$ decay only algebraically. In particular, when the
symbol has a zero or another singularity, one can have
$V_r\sim 1/r$. The sum in Eq.~\eqref{thm:szego1} then grows as
$\log N$, leading into the Fisher-Hartwig regime. Thus rapid decay of
the entries of $T_N(f)$ is not, by itself, an indication that the
subleading correction is finite.

The purpose of this paper is to give a direct boundary interpretation
of the strong Szeg\H{o} term. Let $V$ be the Laurent operator whose
symbol is $v=\log f$. The infinite Laurent operator with symbol $f$ is
then $e^V$, and the finite Toeplitz matrix can be written as the
restriction
\begin{equation}
    T_N(f)=P_Ne^VP_N,
\end{equation}
where $P_N$ projects onto an interval of $N$ lattice sites. The
difference between restricting $e^V$ and exponentiating the restricted
operator $P_NVP_N$ is entirely caused by paths that leave the interval
and return. We refer to these processes as boundary leakage.

In this paper, we consider $P_Ne^{t V}P_N$ and expanding the logarithm of the determinant in $t$ we find  that the first boundary
contribution is
\begin{equation}
    \frac{t^2}{2}
    \operatorname{Tr}
    \bigl(P_NV(1-P_N)VP_N\bigr).
\end{equation}
This expression describes a path that starts inside the interval,
crosses one of its boundaries, and then returns. A jump of length $r$
can cross a given endpoint from exactly $r$ starting sites. The
apparently mysterious factor $r$ in the strong Szeg\H{o} formula is
therefore simply the number of ways in which a jump of range $r$ can
cross the boundary.

Promoting this second-order observation to the full result requires
two further ingredients. First, the two ends of a long interval
decouple. Any term that involves both boundaries must propagate across
the entire interval and vanishes as the separation between the
boundaries tends to infinity. This decoupling is a consequence of
locality and does not require translation invariance. In the appendix
we give an elementary trace-norm proof for arbitrary bounded
finite-range matrices and explain how the argument extends to
sufficiently rapidly decaying matrices.

Second, for a scalar translation-invariant operator, the contribution
of one isolated boundary can be evaluated exactly. Splitting $V$ into
positive and negative shifts and using an elementary commutator
identity, we show that the logarithm of the single-boundary determinant
is exactly quadratic in the parameter $t$. Each endpoint contributes
one half of the strong Szeg\H{o} term. Combining this exact
single-interface result with the decoupling of the two endpoints gives
Eq.~\eqref{thm:szego1}. For finite-range $V$, this argument can be made
rigorous using elementary matrix estimates. Extending it to the full
Szeg\H{o}-Krein class requires additional control, for which
Hankel-operator methods or the Borodin-Okounkov identity provide the
standard tools.

We also present the boundary picture in the language of free
fermions. A related derivation of the Szeg\H{o} theorem using
the boson-fermion correspondence was given in
\cite{Hirayama:1990ma}. The underlying representation of determinants
as fermionic matrix elements, together with the current algebra and
boson-fermion correspondence, is part of the standard free-fermion
formalism 
\cite{DateJimboKashiwaraMiwa1983,MiwaJimboDate2000,AlexandrovZabrodin2013}. A finite
occupied interval has two Fermi edges. In the large-$N$ limit these
edges decouple, and each isolated edge produces one half of the strong
Szeg\H{o} correction. This gives a physical counterpart of the
elementary matrix argument and makes clear which part of the reasoning
continues to hold when translation invariance is broken.

Separating boundary decoupling from the exact evaluation of one
boundary also clarifies what remains true without translation
invariance. For a disordered finite-range matrix, the two boundaries
still decouple as their separation grows. The limiting correction is
therefore a sum of two local edge contributions. However, the special
scalar Toeplitz cancellation at each edge is generally absent, and
higher-order terms need not vanish. The strong Szeg\H{o} formula is
thus a particularly simple member of a broader class of boundary
factorization phenomena.

We illustrate these ideas numerically in three regimes. We first
consider algebraically decaying Fourier coefficients for which the
strong Szeg\H{o} constant is finite but convergence is slow. We then
study the borderline decay $V_r\sim1/r$, where the same boundary
counting predicts logarithmic growth. Finally, we examine both fixed
and ensemble-averaged disorder, demonstrating that boundary
decoupling survives even though the exact quadratic single-edge
formula does not. The goal is to isolate the elementary mechanism behind the formula
and to show which parts of that mechanism persist in more general
settings.

\section{\szego from an elementary Taylor expansion.}
Let $\Omega \subset \ZZ$ be an interval in $\ZZ$, and $\p$ be a projection to $\Omega$. For concreteness, we will take $\Omega = [0, N-1]$. 

Let the Toeplitz matrix be constructed out of the symbol $f(\theta) = e^{v(\theta)}$, and denote the matrix by $T_N(f)=P_N e^V P_N$, where $V_{m n} = V_{m-n}$ itself is a Toeplitz matrix constructed out of the symbol $v(\theta)$. We will be interested in the asymptotics of the determinant $T$, constrained to $\Omega$, when $N$ becomes large, i.e. in
\begin{equation}
 \det_{i,j\in\Omega} [e^{ V}]_{ij}
\end{equation}
It is convenient to add a parameter $t$ and denote the desired determinant by
\begin{equation}
   D_\Omega(t)
   :=\det_{i,j\in\Omega}[e^{tV}]_{ij}
   =\det(\q+\p e^{tV}\p),
   \qquad
   \log D_\Omega(t)=\tr\log(\q+\p e^{tV}\p).
\end{equation}
Our goal is to describe $\log D_\Omega(1)$. As a first step, we will simply expand in $t$, getting:
\begin{align}
    \log D_\Omega(t) &= t \tr (\p V \p) + \frac{t^2}{2} \tr (\p V^2 \p - (\p V \p)^2) + \order{t^3} \nonumber \\
    &= t \tr (\p V \p) + \frac{t^2}{2} \tr (\p V \q V \p) + \order{t^3} \label{eq:t_expansion}
\end{align}
The first order term is immediately the leading \szego term:
\begin{align}
    \tr(\p V \p) = \abs{\Omega} V_0.
\end{align}

Writing explicitly we can understand the second term as a sum over paths going from inside $\Omega$, jumping to a point outside $\Omega$ and then returning back to the initial point. In one dimension, we have, explicitly: \begin{figure}
    \centering
    \includegraphics[width=0.5\linewidth]{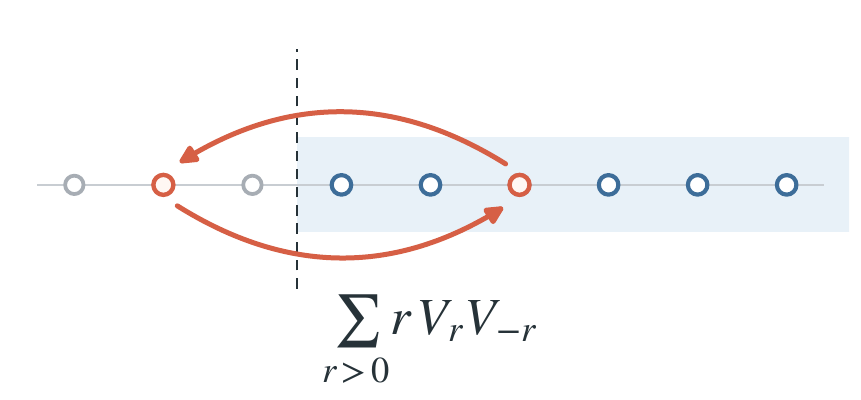}
    \caption{Origin of the strong \szego correction term: For a jump of length \(r\leq N\), exactly \(r\) starting sites near each endpoint allow the path to leave the interval and return.}
    \label{fig:szego fast decay clean}
\end{figure}
\begin{align}\label{eq:second-term-derivation}
    \tr(\p V \q V\p) = \sum_{n \in \Omega} \sum_{m \in \ZZ \setminus \Omega} V_{n-m} V_{m-n} 
\end{align}
any such path of length less than $1<r\leq N$ can have $r$ starting points inside $\Omega$, and paths of length $r>N$ have exactly $N$ possible starting points, leading to 
\begin{equation}
    \tr(\p V \q V\p) = 2 \qty(\sum_{r = 1}^{N} r V_r V_{-r} + N \sum_{r=N+1}^\infty V_r V_{-r}) \label{eq:FiniteNCorrection}
\end{equation}
where the factor $2$ comes from paths going to the left or to the right of the interval $\Omega$.

The second term in the brackets can be neglected in the limit of large $N$ if $V_r$ decay at least as fast as $1/{|r|^{1+ \epsilon}}$, and we get
\begin{equation}
    \log D_\Omega(t) = t \abs{\Omega} V_0 + t^2 \qty(\sum_{r=1}^\infty r V_r V_{-r} + o(1)) + \order{t^3}\label{eq:Szego exp}
\end{equation}
If we now set $t=1$, we have the strong \szego result. The mysterious factor $r$ is therefore just a count: it is the number of starting points from
which a jump of length $r$ crosses an endpoint. 

\subsection{Decoupling of the boundaries}
The argument above, leading to Eq. \eqref{eq:Szego exp}, is a demonstration that the strong \szego result originates from simply counting  matrix elements. It is however a Taylor expansion for small $t$, while the \szego formula has $t=1$.
A rigorous proof of the formula requires control of the next terms. 

The justification for retaining just the linear and quadratic in $t$ term hinges on two mechanisms: (1) when $N$ is large, the left and the right boundaries essentially decouple, and (2) the single boundary contribution, in the large $N$ limit, can be approximated by a semi-infinite interval for which all contributions beyond quadratic in $t$ vanish exactly.

The decoupling of the left and right boundaries is intuitive: one expects that for decaying $V$ any expansion involving the left and right boundaries of $\Omega$ would requires high powers of $V$ and would be less important. For the decoupling, the Toeplitz property, the translational invariance of $V$ is not needed.

In the appendix we outline a rigorous but elementary proof that decoupling the boundaries is justified for matrices $V$ that have finite range $R$ i.e. $V_{ij}=0$ whenever $\left|i-j\right|>R$. We also show that the determinant is approximated by contributions from two semi-infinite determinants representing the two boundaries up to factorially vanishing corrections in $N$, uniformly in $t\in [0,1]$.  The more general situation, where $V$ decays, but is not strictly compact can also be achieved. In the mathematical literature a convenient approach is based on the Borodin-Okounkov formula \cite{BorodinOkounkov2000,BasorWidom2000}. As we see below, the treatment of a single boundary can be done essentially exactly.
\section{One interface is exactly quadratic}
\label{sec:one-interface}

We first isolate a single boundary.  Let $S$ be shift operator 
$S\lvert j\rangle=\lvert j+1\rangle$, and let
$P_-$ project onto the negative half-line
$\mathbb Z_- = \{-1,-2,\ldots\}$.  For a translation-invariant scalar operator (a scalar Laurent operator), write
\begin{equation}
 V=V_0 I+V_++V_-,
 \qquad
 V_+=\sum_{r>0}V_rS^r,
 \qquad
 V_-=\sum_{r>0}V_{-r}S^{-r}.
 \label{eq:laurent-splitting}
\end{equation}
Set $A=P_-V_+P_-$ and $B=P_-V_-P_-$, regarded as operators on
$P_-\ell^2(\mathbb Z)$.  Since the shifts in $V$
commute before the boundary is introduced, their exponentials factorize,
and triangularity gives
\begin{equation}
 P_-e^{tV}P_-=P_-e^{tV_0}e^{tP_-V_+P_-}e^{tP_-V_-P_-}P_-
 \label{eq:half-line-factorization}
\end{equation}
To see this, note that $V_-$ translates to the left so starting with points that belong to $P_-$ will always stay in $P_-$, therefore  $e^{V_-}P_-=e^{P_-V_-P_-}P_-$. Similarly, $P_- e^{V_+}=P_- e^{P_-V_+P_-}$. 
The contribution to the determinant from the boundary is therefore,
\begin{equation}
 D_-(t)
 =\det\!\left[(P_-e^{tV}P_-)e^{-tP_-VP_-}\right]
 =\det\!\left(e^{tA}e^{tB}e^{-t(A+B)}\right).
 \label{eq:relative-edge-determinant}
\end{equation}
For finite-range $V$, the commutator $[A,B]$ is finite rank, so the relative
determinant in \eqref{eq:relative-edge-determinant} is well defined (it is an ordinary Fredholm
determinant).  A short matrix calculation (which is a special case
of a Helton-Howe-Pincus identity\cite{Pincus1968,HeltonHowe1973}), gives
\begin{align}
 \frac{d}{dt}\log D_-(t)
 &=\operatorname{Tr}\!\left(e^{-tB}Ae^{tB}-A\right) \notag\\
 &=\int_0^t
   \operatorname{Tr}\!\left(e^{-sB}[A,B]e^{sB}\right)\,ds
 =t\operatorname{Tr}[A,B].
 \label{eq:edge-derivative}
\end{align}
In the last step we used cyclicity only with the trace-class operator
$[A,B]$.  Thus no illicit use of $\operatorname{Tr}(AB)=\operatorname{Tr}(BA)$
is involved: $AB$ and $BA$ need not separately be trace class.

The remaining trace can be read directly from the lattice basis.  For
$r,s>0$,
\begin{equation}
 \operatorname{Tr}
 \bigl[P_-S^rP_-,P_-S^{-s}P_-\bigr]
 =r\,\delta_{rs}.
 \label{eq:shift-commutator}
\end{equation}
Indeed, when $r=s$ the two products differ by the projection onto the $r$
sites $-r,-r+1,\ldots,-1$ next to the boundary; when $r\ne s$ the
commutator has no diagonal part.  Consequently,
\begin{equation}
 \operatorname{Tr}[A,B]
 =E(V),
 \qquad
 E(V):=\sum_{r>0}rV_rV_{-r},
 \label{eq:edge-energy}
\end{equation}
and integration of \eqref{eq:edge-derivative} yields the exact identity
\begin{equation}
{\displaystyle
 \log D_-(t)=\frac{t^2}{2}E(V).}
 \label{eq:one-edge-quadratic}
\end{equation}
Thus a single interface is exactly quadratic in $t$, and gives half of the quadratic contribution to the finite interval case \eqref{eq:Szego exp}.  

The result of this section may be summarized as an elementary
finite-range version of the strong \szego theorem.  If
\(V=\sum_{|r|\leq R}V_rS^r\) is a scalar Laurent operator of finite
range, then
\begin{equation}
\log D_N(e^{tV})
=
NtV_0
+
t^2\sum_{r>0}rV_rV_{-r}
+
o(1),
\qquad N\to\infty ,
\label{eq:finite-range-szego}
\end{equation}
uniformly for \(t \in [0,1]\).  Thus the two boundaries decouple
in the large-\(N\) limit, while all contributions beyond quadratic
order cancel at each isolated boundary.  The appendix gives a
trace-norm formulation of the decoupling argument; extending this
elementary proof from finite-range operators to the full
Szeg\H{o}-Krein class requires additional control of the long-range
tails.

\section{\szego from a Fermi sea picture}
In our second derivation, we will resort to the physics of Fermions, where determinants occur naturally in many problems. In particular, we start by representing our desired determinant as a Fermion expectation value. Let us define fermions $\psi_k$ on the infinite lattice with the usual anti-commutation relations $[\psi^\dagger_n,\psi_m]_{+}=\delta_{n,m}$, and define a Fermi sea state,
\begin{equation}
    \ket{\Omega} = \prod_{k \in \Omega} \psi^\dagger_k \ket{0}
\end{equation}
where $\ket{0}$ is the vacuum with no fermions. We define a Hamiltonian operator out of the matrix $V$ and the fermions,
\begin{equation}
    H = \sum_{m,n \in \ZZ} V_{m n} \psi^\dagger_m \psi_n
\end{equation}
We then have the standard slater determinant formula,
\begin{equation}
    D_\Omega(t)=\bra{\Omega} e^{t H} \ket{\Omega} = \det(\p e^{t V} \p)
\end{equation}
Therefore,
\begin{equation}
    \log D_\Omega(t) = \log \bra{\Omega} e^{t H} \ket{\Omega} = \sum_{n \geq 1} \frac{t^n}{n!} \expval{H^n}_c
\end{equation}
where $\expval{H^n}_c$ is called the $n^{\text{th}}$ cumulant of the  operator $H$. Using the fact that,
\begin{equation}
    \bra{\Omega} \psi^\dagger_m \psi_n \ket{\Omega} = (\p)_{m n} \quad \text{and} \quad \bra{\Omega} \psi_m \psi^\dagger_n \ket{\Omega} = (\q)_{m n},
\end{equation}
We can verify that the cumulants of the Hamiltonian match with the Taylor expansion of $\log D_\Omega(t)$ term by term. 

When the size of the Fermi interval is large, the modes near the Fermi points at ends of the interval $\Omega$ decouple. The contribution to the expectation value due to the interaction between modes near different Fermi points is suppressed by the decay of $V$. As stressed before, we do not need the translational invariance of $V$ here.

When the Fermi points decouple, the expectation value
of $e^{t H}$ in the state $\ket{\Omega}$ can be approximated to be the product of two expectation values,
\begin{equation}
    \bra{\Omega} e^{t H} \ket{\Omega} \approx \expval{e^{t H_L}}_L \expval{e^{t H_R}}_R
\end{equation}
where $H_{R,L}$ are the decoupled Hamiltonians for the right/left modes, and the expectation values are evaluated on the Dirac sea state, where one side of the lattice is fully filled. 
We approximately evaluate each of the expectation values above. We will now assume that $V$ is translationally invariant. We have,
\begin{equation}
    H_R \approx \sum_{r \in \ZZ} V_r \sum_{n \in \ZZ} R^\dagger_{n+r} R_n 
\end{equation}
Where $R_n$ are fermions near the right edge. Defining normal ordering with respect to the Dirac sea, and normal ordered current operator,
\begin{equation}
    J_r = \sum_{n \in \ZZ} :R^\dagger_{n+r} R_n :
\end{equation}
we can write,
\begin{equation}
    H_R = \; :H_R: + \expval{H_R}_R = \sum_{r \neq 0} V_r J_r + \frac{1}{2}\abs{\Omega}V_0
\end{equation}
$\frac{1}{2}\abs{\Omega}$ is the natural regulator for the expectation value of $H_R$. It is easy to verify from Wick's theorem the algebra of the normal ordered currents to be,
\begin{equation}
    [J_r, J_s] = s \delta_{r+s,0}
\end{equation}
Using the fact that $J_{r<0}$ acting on the right Dirac vacuum is zero, and the standard Baker-Campbell-Hausdorff lemma, we get,
\begin{equation}
    \expval{e^{t \; :H_R:}}_R = \expval{\exp(t\sum_{r>0} V_r J_r + t\sum_{s<0} V_s J_s
    )}_R = \exp(\frac{t^2}{2} \sum_{r>0} r V_r V_{-r})
\end{equation}
This result is true for all $t$. 
Putting together the left and right mode contributions, we have,
\begin{equation}
    \log D_\Omega(t) \approx t \abs{\Omega} V_0 + t^2 \sum_{r>0} r V_r V_{-r} + o(1)
\end{equation}
All higher order cumulants are vanishing in this approximation. This derivation is essentially a re-writing of the matrix argument presented in the last section to the language of fermions.

\section{Numerical illustrations}
\label{sec:numerics}

We now illustrate several features of the boundary picture: convergence
inside the strong Szeg\H{o} regime, logarithmic growth Fisher-Hartwig regime,
power-law growth for still slower decay, and the  behavior
with real-space disorder.  Throughout, we subtract the extensive term and write
\begin{equation}
C_N=\log\det T_N(e^V)-NV_0 .
\label{eq:numerical-CN}
\end{equation}
For a translation-invariant matrix, its finite-$N$ quadratic boundary
contribution is obtained from \eqref{eq:FiniteNCorrection}:
\begin{equation}
E_N(V)=\sum_{r\geq1}\min(r,N)V_rV_{-r}.
\label{eq:numerical-EN}
\end{equation}

\subsection{Slow but sufficient algebraic decay}
\label{subsec:numerics-sufficient-decay}

To make convergence deliberately slow while remaining inside the strong
Szeg\H{o} class, we take
\begin{equation}
V_0=0,
\qquad
V_r=V_{-r}=\frac{\lambda}{r^p},
\qquad
\lambda=0.35,
\quad p=1.2.
\label{eq:numerical-power-law}
\end{equation}
The strong Szeg\H{o} constant is finite,
\begin{equation}
E(V)=\lambda^2\sum_{r\geq1}r^{1-2p}
=\lambda^2\zeta(2p-1)
=0.38042954\ldots .
\label{eq:numerical-power-law-E}
\end{equation}
On the other hand,
\begin{equation}
E(V)-E_N(V)\sim
\frac{\lambda^2}{(2p-2)(2p-1)}N^{2-2p},
\label{eq:numerical-power-law-rate}
\end{equation}
so for $p=1.2$ the convergence is only of order $N^{-0.4}$.

For the numerical calculation, we numerically Fourier transform sample the symbol
$\exp[V(\theta)]$,
and form the corresponding $N\times N$ Toeplitz matrix.  Fig.~\ref{fig:sufficient-decay} shows the exact correction $C_N$, the
finite-$N$ expression $E_N(V)$, and their common limiting value. 

\begin{figure}[t]
    \centering
    \includegraphics[width=0.7\linewidth]{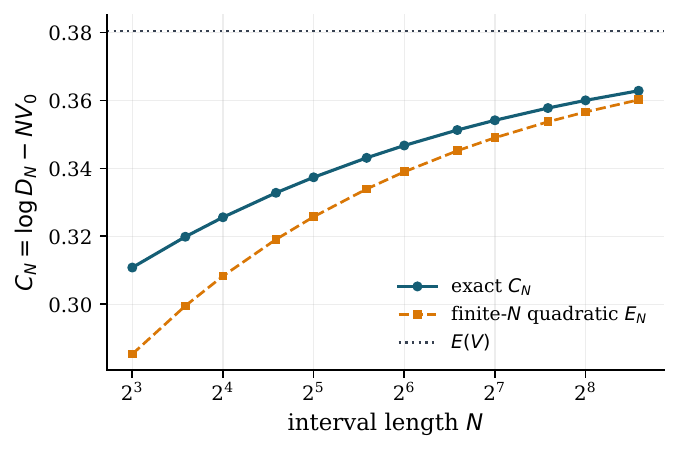}
    \caption{Slow but sufficient algebraic decay.  Both the exact
    boundary correction $C_N$ and the finite-$N$ quadratic expression
    $E_N(V)$ approach the strong Szeg\H{o} constant.  Their convergence
    is slow because $p=1.2$ lies close to the borderline value $p=1$.}
    \label{fig:sufficient-decay}
\end{figure}

\subsection{Borderline decay and logarithmic growth}
\label{subsec:numerics-borderline}

To study Fisher-Hartwig like logarithmic behavior, we take
\begin{equation}
V_0=0,
\qquad
V_r=V_{-r}=\frac{\alpha}{r},
\qquad
\alpha=0.30.
\label{eq:numerical-borderline}
\end{equation}
Then
\begin{equation}
V(\theta)
=-2\alpha\log\left|2\sin\frac{\theta}{2}\right|,
\qquad
e^{V(\theta)}
=\left|2\sin\frac{\theta}{2}\right|^{-2\alpha}.
\label{eq:numerical-borderline-symbol}
\end{equation}
The Szeg\H{o} sum no longer converges.  Its finite-$N$ form already
predicts
\begin{equation}
E_N(V)=\alpha^2\log N+O(1).
\label{eq:numerical-borderline-prediction}
\end{equation}
This example is especially convenient because it is a pure Fisher-Hartwig singularity of the form $f(e^{i\theta}) = \abs{1 - e^{i\theta}}^{-\alpha}$.
Its finite-$N$ Toeplitz determinant is known exactly
\cite{boettcherSilbermann1985,deiftItsKrasovsky2014}:
\begin{equation}
D_N
=
\frac{G(1-\alpha)^2}{G(1-2\alpha)}
\frac{G(N+1)G(N+1-2\alpha)}
     {G(N+1-\alpha)^2}
=
\prod_{j=0}^{N-1}
\frac{\Gamma(j+1)\Gamma(j+1-2\alpha)}
     {\Gamma(j+1-\alpha)^2},
\label{eq:numerical-borderline-product}
\end{equation}
where $G$ denotes the Barnes $G$-function.  The second equality
follows from $G(z+1)=\Gamma(z)G(z)$.
Numerical evaluation of this product confirms
\begin{equation}
\log D_N=\alpha^2\log N+O(1).
\label{eq:numerical-borderline-result}
\end{equation}
As shown in Fig.~\ref{fig:borderline-growth}, the effective slope tends
to $\alpha^2=0.09$; at $N=4096$ it is $0.090007$.

\begin{figure}[t]
    \centering
    \includegraphics[width=\linewidth]{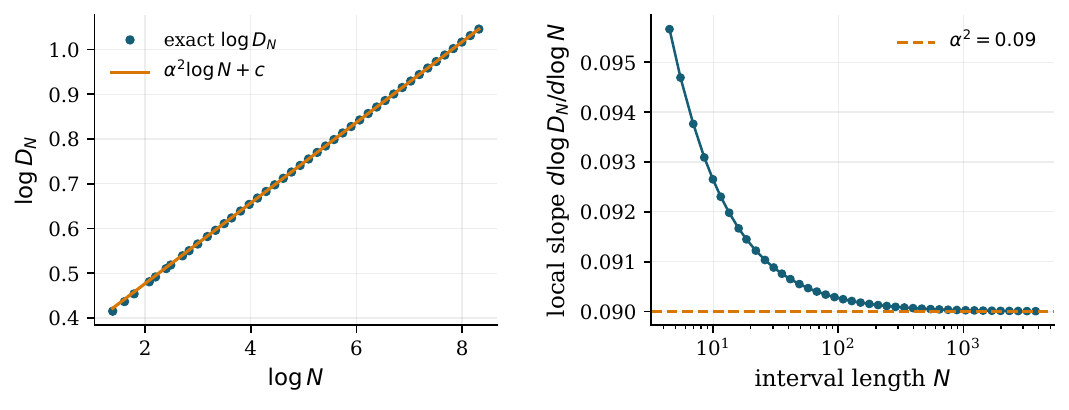}
    \caption{Logarithmic growth for the borderline coefficients
    $V_r=\alpha/r$.  The left panel plots the exact determinant against
    $\log N$.  The right panel shows the local slope approaching the
    predicted coefficient $\alpha^2=0.09$.}
    \label{fig:borderline-growth}
\end{figure}

\subsection{Slower than 1/r}
\label{subsec:numerics-too-slow}

For $1/2<p<1$, the quadratic boundary sum grows as a power rather
than logarithmically.  For $V_r=V_{-r}=\lambda/r^p$,
\begin{equation}
E_N(V)\sim
\frac{\lambda^2}{(2-2p)(2p-1)}N^{2-2p}.
\label{eq:numerical-too-slow-prediction}
\end{equation} 
we choose $p=0.75$ and $\lambda = -0.05$ for the numerical computation.
The negative sign is important: for the opposite sign,
$V(\theta)$ diverges positively as $\theta\to0$ and $e^{V(\theta)}$
is not integrable.  Figure~\ref{fig:too-slow-clean} plots the exact
correction and its quadratic approximation against
$N^{2-2p}=N^{1/2}$.  Both grow with $N$ with the same power, but they no longer coincide.
While the quadratic boundary contribution correctly diagnoses the breakdown of a finite strong \szego constant, the coefficient of the divergence is no longer fixed by the quadratic term alone, unlike in the borderline divergence case.

\begin{figure}[t]
    \centering
    \includegraphics[width=.68\linewidth]{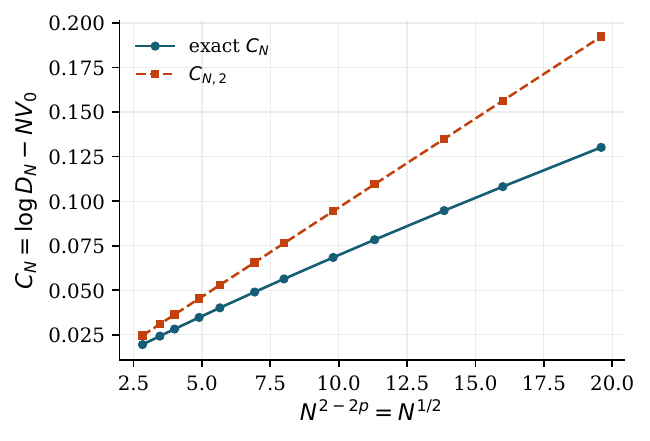}
    \caption{Example with decay slower than the borderline.
    For $p=0.75$, the quadratic boundary sum grows as $N^{1/2}$.
    The exact correction displays the same power-law growth over the
    accessible range but differs visibly from the quadratic expression,
    showing that higher-order boundary processes are no longer negligible.}
    \label{fig:too-slow-clean}
\end{figure}

\subsection{Adding Disorder}
\label{subsec:numerics-disorder}

To separate boundary decoupling from the special Toeplitz cancellation,
consider the real symmetric random tridiagonal operator
\begin{equation}
V^{(\omega)}_{n,n}=a_n,
\qquad
V^{(\omega)}_{n,n+1}=V^{(\omega)}_{n+1,n}=J_n,
\label{eq:numerical-disorder-model}
\end{equation}
where
\begin{equation}
a_n=\sigma_a\xi_n,
\qquad
J_n=J(1+\sigma_J\eta_n),
\qquad
J=0.55,
\quad \sigma_a=0.50,
\quad \sigma_J=0.40,
\label{eq:numerical-disorder-parameters}
\end{equation}
and $\xi_n,\eta_n$ are independent and uniform on $[-1,1]$.  We use the
relative boundary correction
\begin{equation}
C_N^{(\omega)}
=\log\det\bigl(P_Ne^{V^{(\omega)}}P_N\bigr)
-\Tr(P_NV^{(\omega)}P_N).
\label{eq:numerical-disorder-CN}
\end{equation}
For a fixed realization, we also set the bond at the midpoint of the
interval to zero.  The resulting operator $e^{V_{\mathrm{cut}}^{(\omega)}}$ is a direct sum of a left-edge
problem and a right-edge problem; denote its correction by
\begin{equation}
C_{N,\mathrm{cut}}^{(\omega)}
=\log\det\bigl(P_Ne^{V_{\mathrm{cut}}^{(\omega)}}P_N\bigr)
-\Tr(P_N  V_{\mathrm{cut}}^{(\omega)} P_N).
\label{eq:numerical-disorder-CNcut}
\end{equation}
Thus
\begin{equation}
C_N^{(\omega)}-C_{N,\mathrm{cut}}^{(\omega)}
\label{eq:numerical-disorder-decoupling-error}
\end{equation}
directly measures the residual communication between the two edges.
Figure~\ref{fig:disorder-fixed} shows that this difference rapidly falls
to numerical roundoff.  At $t=1$, the second-order contribution is simply
\begin{equation}
C_{N,2}^{(\omega)}
=\frac12\Tr\left[
P_N\bigl(V^{(\omega)}\bigr)^2P_N
-\bigl(P_NV^{(\omega)}P_N\bigr)^2
\right]
=\frac12\left(J_{-1}^2+J_{N-1}^2\right).
\label{eq:numerical-disorder-quadratic}
\end{equation}
This quadratic contribution does not coincide with the exact edge
contribution.  The survival of
higher edge terms is consistent with the fact that the exact
single-interface cancellation used in the Toeplitz problem is no longer
available.
\begin{figure}[H]
    \centering
    \includegraphics[width=.78\linewidth]{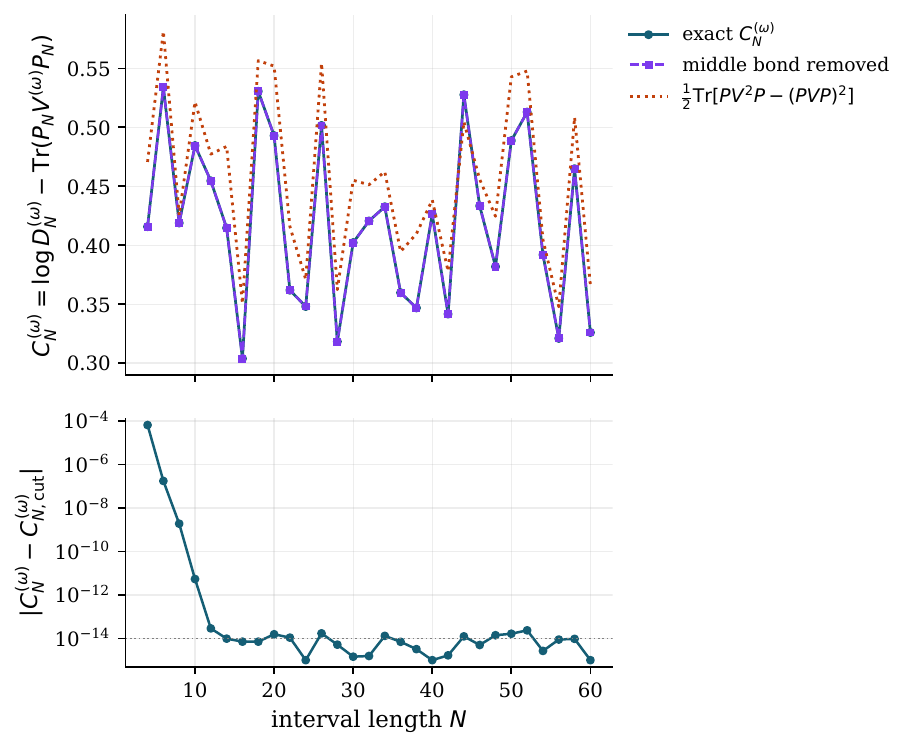}
    \caption{One fixed disorder realization.  The correction for the
    full interval is indistinguishable from the direct sum obtained by
    removing the middle bond once the two edges are sufficiently far
    apart.  The lower panel displays the rapidly decreasing decoupling
    error.  The quadratic contribution
    $\frac12\Tr[P_NV^2P_N-(P_NVP_N)^2]$ remains only an approximation.}
    \label{fig:disorder-fixed}
\end{figure}
Figure~\ref{fig:disorder-average} shows an average over $240$
independent realizations.  The mean correction becomes independent of
$N$ once the two boundary regions are separated.  The sample variance
also approaches an $N$-independent value, reflecting the local disorder
at the two edges.
For the parameters above, the large-$N$ sample means are
\begin{equation}
\overline{C_N^{(\omega)}}\simeq0.3297,
\qquad
\overline{C_{N,2}^{(\omega)}}\simeq0.3294.
\label{eq:numerical-disorder-means}
\end{equation}
{\it Numerical procedure.}
We evaluated these finite-range disordered examples in real space.  For each realization,
$V^{(\omega)}$ was represented by a finite tridiagonal matrix on an interval
containing the $N$-site region of interest and buffers on both sides.  We
diagonalized this larger matrix, constructed its exponential, restricted it
to the central region, and evaluated the relative boundary correction.  For
the fixed realization we used $28$ buffer sites on each side, giving
$L=N+56$.  For the disorder average we used $L=140$ and extracted intervals
with $N\leq80$, leaving at least $30$ exterior sites on either side.  Since
$V^{(\omega)}$ is nearest-neighbor, the influence of the artificial outer
boundaries is negligible for these buffer sizes.

\begin{figure}[H]
    \centering
    \includegraphics[width=.72\linewidth]{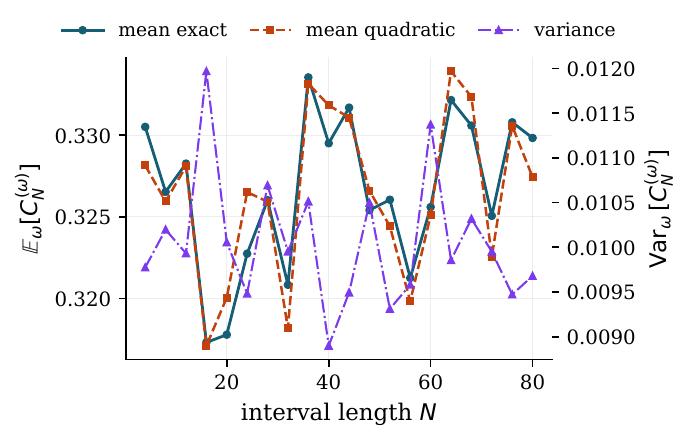}
    \caption{Disorder average over $240$ realizations.  The exact mean,
    quadratic mean, and sample variance all become independent of the
    interval length once the two boundaries are separated.}
    \label{fig:disorder-average}
\end{figure}

\subsection{Long-range disorder at and beyond the borderline}
\label{subsec:numerics-long-range-disorder}

We finally introduce real-space disorder with algebraically decaying
matrix elements.  For $i<j$, let
\begin{equation}
V_{ij}^{(\omega)}=V_{ji}^{(\omega)}
=\frac{g\,\xi_{ij}}{|i-j|^p},
\qquad
V_{ii}^{(\omega)}=0,
\qquad g=0.10,
\label{eq:numerical-long-range-disorder}
\end{equation}
where the $\xi_{ij}$ are independent standard normal variables.  The
ensemble-averaged quadratic contribution in the infinite system is
\begin{equation}
\overline{C_{N,2}^{(\omega)}}
=g^2\sum_{r\geq1}\frac{\min(r,N)}{r^{2p}}.
\label{eq:numerical-long-range-disorder-quadratic}
\end{equation}
Consequently,
\begin{equation}
\overline{C_{N,2}^{(\omega)}}
\sim
\begin{cases}
g^2\log N, & p=1,\\[2mm]
\displaystyle
\frac{g^2}{(2-2p)(2p-1)}N^{2-2p},&1/2<p<1.
\end{cases}
\label{eq:numerical-long-range-disorder-scaling}
\end{equation}

{\it Numerical details.}
For each realization we generated the full real symmetric matrix on
$L=1024$ sites and diagonalized it as
$V^{(\omega)}=U\Lambda U^{\mathsf T}$.  We then formed
$e^{V^{(\omega)}}=Ue^{\Lambda}U^{\mathsf T}$ and extracted its central
$N\times N$ principal block for $N\leq64$.  Its logarithmic determinant
gives $C_N^{(\omega)}$ because the diagonal of $V^{(\omega)}$ vanishes.
The quadratic comparison was evaluated independently from
$\frac12\Tr[P_NV^2P_N-(P_NVP_N)^2]$.  Centering the interval leaves a
buffer of at least $480$ sites on either side at the largest $N$ and
therefore suppresses effects of the artificial outer boundaries.
Figure~\ref{fig:long-range-disorder}
shows averages over $80$ realizations.  At $p=1$ the mean correction is
approximately linear in $\log N$, while at $p=0.75$ it is approximately
linear in $N^{1/2}$.  The quadratic boundary expression works remarkably
well even though the matrix is both disordered and long-ranged: at
$N=64$ its relative difference from the exact ensemble mean is about
$0.18\%$ for $p=1$ and $0.46\%$ for $p=0.75$.  The agreement is expected
to be best at the weak disorder strength $g=0.10$ used here, because the
first omitted terms are of higher order in $g$.  More specifically, the
Gaussian ensemble is invariant under $V^{(\omega)}\mapsto-V^{(\omega)}$,
so all odd powers vanish after disorder averaging and the first correction
to the quadratic mean is of order $g^4$.  Thus disorder does not
remove the logarithmic or power-law boundary growth; it changes the
sample-to-sample fluctuations while the averaged quadratic boundary
count continues to capture the leading behavior.

\begin{figure}[H]
    \centering
    \includegraphics[width=\linewidth]{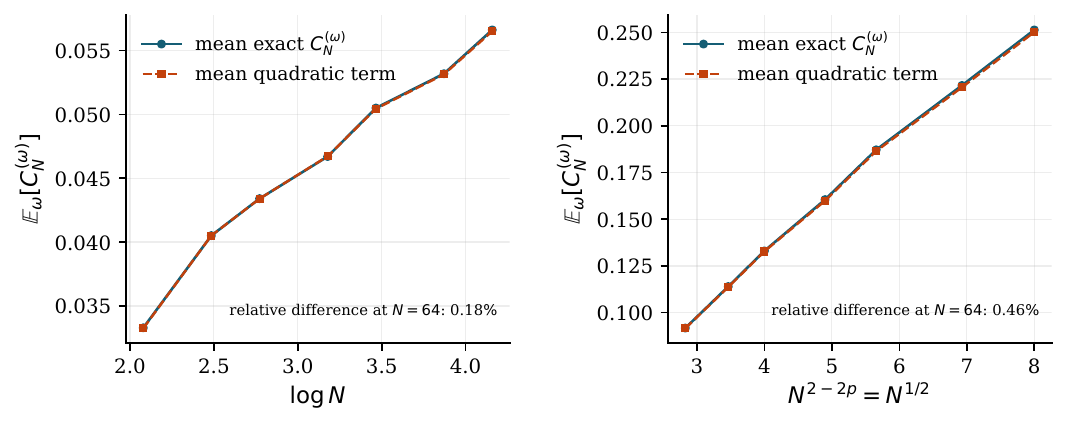}
    \caption{Long-range disorder at and beyond the borderline.  The left
    panel shows logarithmic growth for rms matrix elements proportional
    to $1/|i-j|$.  The right panel shows $N^{1/2}$ growth when the decay
    is slowed to $1/|i-j|^{0.75}$.  The exact and quadratic ensemble means
    differ by less than $0.5\%$ at $N=64$ for $g=0.10$.}
    \label{fig:long-range-disorder}
\end{figure}

\section{Summary and outlook}
\label{sec:summary}

We have presented an elementary boundary interpretation of the strong Szeg\H{o} formula. Restricting an infinite operator to a finite interval creates two interfaces, and the subleading part of the logarithmic determinant measures the leakage through these interfaces. At second order, the factor $r$ in
\[
\sum_{r>0}rV_rV_{-r}
\]
simply counts the sites from which a jump of length $r$ can cross a boundary. The full result follows by combining two facts: the two boundaries decouple when their separation becomes large, and the determinant associated with a single interface of a scalar Laurent operator is exactly quadratic. The first fact is a general locality statement, whereas the second uses translation invariance and the commutativity of the scalar Laurent coefficients.

The same picture was obtained from free fermions. In that formulation, the finite interval is a filled Fermi sea with two Fermi edges. Their decoupling converts the determinant into a product of two single-edge expectation values, and the current algebra makes each of these expectation values Gaussian. The matrix and fermionic derivations therefore describe the same mechanism in two languages: the strong Szeg\H{o} term is the finite contribution produced by two independent boundaries.

The numerical examples illustrate both the reach and the limitations of the boundary picture.  When the Fourier coefficients decay sufficiently rapidly, the correction
\[
C_N=\log D_N-NV_0
\]
approaches the strong Szeg\H{o} constant, although this convergence can be extremely slow.  At the borderline $V_r\sim 1/r$, the quadratic boundary sum grows as $\log N$ and correctly reproduces the leading Fisher-Hartwig coefficient.  When the decay is still slower, the correction grows as a power of $N$.  In the clean example studied here, the quadratic term captures the observed power law but not its coefficient, as higher order boundary contributions are no longer negligible.  Thus the boundary expansion remains useful beyond the Szeg\H{o}-Krein regime, but it no longer reduces to the universal quadratic constant of the strong Szeg\H{o} theorem.

The disordered examples separate two features that coincide in the Toeplitz problem.  For finite-range disorder, the two boundaries still decouple once they are sufficiently far apart, confirming that boundary decoupling is a geometric and quasi-local phenomenon that does not require translation invariance.  The contribution of an individual boundary, however, is generally not exactly quadratic and depends on the local disorder realization.  Remarkably, for the weak centered Gaussian ensembles considered here, the disorder-averaged quadratic term remains extremely accurate over the accessible sizes even at and beyond the Szeg\H{o}-Krein threshold.  This agreement is enhanced because the disorder distribution is invariant under \(V\mapsto -V\), so all odd orders vanish after disorder averaging and that the first correction to the quadratic mean is of fourth order in the disorder strength.  It should not be interpreted as a generic property of clean slowly decaying symbols, nor as evidence of uniform accuracy as $N\to\infty$.

Several extensions appear natural. For disordered operators, one may study the complete probability distribution of the boundary correction, its correlations between separated boundaries, and the conditions under which it becomes self-averaging.  For finite-range systems these quantities are determined by disorder in a fixed boundary layer.  For long-range systems the relevant boundary region is no longer strictly local, and an important question is how its effective width grows with $N$ and with the decay exponent.  It would also be useful to determine analytically when higher-order terms merely renormalize the coefficient predicted by the quadratic boundary count and when they change the leading scaling itself.

In higher dimensions, the two endpoints of an interval are replaced by an extended boundary, suggesting surface area contributions together with additional terms associated with corners, curvature, or singular momentum space structure.  The same viewpoint may provide an elementary route to Sobolev-Widom type asymptotics and their disordered analogues.  Finally, matrix-valued symbols and internal degrees of freedom introduce noncommuting boundary processes.  Understanding which of the scalar cancellations survive in this setting may connect the present approach to multiband free-fermion systems, entanglement determinants, and topological boundary phenomena.

Another important direction is the extension to block Toeplitz
operators describing spin, orbital, sublattice, or Nambu degrees of
freedom. In this case the Fourier coefficients are matrices and need
not commute. Boundary decoupling should remain valid under suitable
locality assumptions, but the single-interface
Baker-Campbell-Hausdorff reduction is no longer generally quadratic:
nested commutators can produce genuine higher-order edge terms. A
recent two-band free-fermion example was studied by Jordan, Gangardt,
and Abanov \cite{JordanGangardtAbanov2026}. In that problem the
thermodynamic return amplitude is a block Szeg\H{o}-Widom constant,
and its logarithm contains both a quadratic contribution and a bounded
theta-function correction. It would be useful to understand more
generally which algebraic properties of a matrix-valued symbol
determine the resulting boundary functional and under what conditions
the scalar quadratic formula survives.

\section*{Acknowledgements}
IK would like to thank Alexander Abanov for discussions. RR thanks Diana Vaman and Joshua Bautista for useful discussions. 

{\it Use of AI-Assisted Tools}.
Large language models (ChatGPT) assisted with text editing and code development. All mathematical derivations, scientific claims, and code were independently verified by the authors.

\appendix

\section{Finite-range boundary decoupling}
\label{app:boundary-decoupling}

Let $V$ be a bounded operator on $\ell^2(\mathbb Z)$ of finite
range $R$:
\begin{equation}
V_{ij}=0
\qquad\text{whenever}\qquad
|i-j|>R.
\label{eq:decoupling-finite-range}
\end{equation}
No translation invariance is assumed. Let $P_N$ denote the
projection onto $\{0,\ldots,N-1\}$, and define
\begin{equation}
W_N=P_NVP_N.
\label{eq:decoupling-WN}
\end{equation}
The relative boundary operator is
\begin{equation}
\mathcal R_N(t)
=
\bigl(P_Ne^{tV}P_N\bigr)e^{-tW_N},
\label{eq:decoupling-relative-matrix}
\end{equation}
regarded as an operator on $P_N\ell^2(\mathbb Z)$.

We compare $\mathcal R_N(t)$ with the corresponding single-boundary
problems. Introduce the half-line projections
\begin{equation}
P=\mathbf 1_{\{j\geq0\}},
\qquad
P^{(N)}=\mathbf 1_{\{j<N\}}.
\label{eq:decoupling-halfline-projections}
\end{equation}
They commute and satisfy
\begin{equation}
P_N=PP^{(N)},
\qquad
I-P_N=P^\perp+\bigl(P^{(N)}\bigr)^\perp.
\label{eq:decoupling-complement-split}
\end{equation}
The two terms in the second identity project onto the disjoint exterior
half-lines $\{j<0\}$ and $\{j\geq N\}$, respectively.
Set
\begin{equation}
W_L=PVP,
\qquad
W_R^{(N)}=P^{(N)}VP^{(N)}.
\label{eq:decoupling-halfline-compressions}
\end{equation}
The isolated left and right-boundary corrections are the trace-class
operators
\begin{equation}
\mathcal E_L(t)
=
\int_0^t
Pe^{sV}P^\perp VPe^{-sW_L}\,ds
\label{eq:decoupling-left-edge-operator}
\end{equation}
and
\begin{equation}
\mathcal E_R^{(N)}(t)
=
\int_0^t
P^{(N)}e^{sV}\bigl(P^{(N)}\bigr)^\perp
VP^{(N)}e^{-sW_R^{(N)}}\,ds.
\label{eq:decoupling-right-edge-operator}
\end{equation}
They are trace class because $P^\perp VP$ and
$\bigl(P^{(N)}\bigr)^\perp VP^{(N)}$ have rank at most $R$.
Next we decompose into left and right boundary contributions:
\begin{lemma}[Exact boundary decomposition]
\label{lem:exact-boundary-decomposition}
For every $N$ and $t$,
\begin{equation}
\begin{split}
\mathcal R_N(t)-I
={}&
\int_0^t
P_Ne^{sV}P^\perp VP_Ne^{-sW_N}\,ds
+
\int_0^t
P_Ne^{sV}\bigl(P^{(N)}\bigr)^\perp
VP_Ne^{-sW_N}\,ds.
\end{split}
\label{eq:decoupling-boundary-split}
\end{equation}
\end{lemma}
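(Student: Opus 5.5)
The plan is to differentiate $F(s)=P_Ne^{sV}P_Ne^{-sW_N}$, viewed as an operator on $P_N\ell^2(\mathbb Z)$, and then integrate from $0$ to $t$. At $s=0$ we have $F(0)=P_N\cdot P_N=P_N$, which is the identity on the range of $P_N$. Integrating the derivative therefore gives $\mathcal R_N(t)-I=\int_0^tF'(s)\,ds$. Both exponentials are entire functions of bounded operators, so the derivative can be taken termwise in norm. The product rule gives
\begin{equation*}
F'(s)=P_Ne^{sV}VP_Ne^{-sW_N}-P_Ne^{sV}P_NW_Ne^{-sW_N}.
\end{equation*}
Here we used that $e^{-sW_N}$ commutes with $W_N$ and that $W_N=P_NW_N=W_NP_N$. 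This second point lets us write $P_Ne^{-sW_N}$ and $e^{-sW_N}$ interchangeably when acting on $P_N\ell^2$; off that range, $e^{-sW_N}$ acts as the identity, so all expressions should be compressed by $P_N$ on the right.

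Next, write $W_N=P_NVP_N$ and use $P_N^2=P_N$. The second term becomes $P_Ne^{sV}P_NVP_Ne^{-sW_N}$, and the first can be rewritten as $P_Ne^{sV}VP_Ne^{-sW_N}$ by inserting $P_N$ on the right. Subtracting gives
\begin{equation*}
F'(s)=P_Ne^{sV}(I-P_N)VP_Ne^{-sW_N}.
\end{equation*}
This is the analogue, for general (not necessarily translation-invariant) $V$, of the leakage term $\tr(\p V\q V\p)$ from Eq.~\eqref{eq:t_expansion}.

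Finally, substitute the splitting $I-P_N=P^\perp+(P^{(N)})^\perp$ from \eqref{eq:decoupling-complement-split}. This holds because the exterior of $\{0,\dots,N-1\}$ is the disjoint union of $\{j<0\}$ and $\{j\ge N\}$. By linearity of the integral, the expression separates into exactly the two terms in \eqref{eq:decoupling-boundary-split}.

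The only mildly delicate point is bookkeeping about domains: $e^{-sW_N}$ on the full space versus on $P_N\ell^2$. I would handle it by stating at the outset that every operator is right-multiplied by $P_N$ and that $e^{-sW_N}P_N=P_Ne^{-sW_N}P_N$. No trace-class issue arises here, because the identity is an operator identity between bounded operators. Trace-class properties, which follow from the finite rank of $P^\perp VP_N$ and $(P^{(N)})^\perp VP_N$ (each of rank at most $R$), are needed only later when determinants are taken.
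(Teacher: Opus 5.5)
Your proposal is correct and follows the paper's proof essentially verbatim. You differentiate $F_N(s)=(P_Ne^{sV}P_N)e^{-sW_N}$ to get $P_Ne^{sV}(I-P_N)VP_Ne^{-sW_N}$, integrate from $0$ to $t$ using $F_N(0)=I$ on $P_N\ell^2(\mathbb Z)$, and then split $I-P_N=P^\perp+(P^{(N)})^\perp$. Your remark that $W_N$ commutes with $P_N$, so that $e^{-sW_N}P_N=P_Ne^{-sW_N}P_N$, usefully makes explicit the domain bookkeeping the paper leaves implicit.
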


\begin{proof}
Define
\begin{equation}
F_N(s)
=
\bigl(P_Ne^{sV}P_N\bigr)e^{-sW_N}.
\label{eq:decoupling-FN}
\end{equation}
Since $W_N=P_NVP_N$, differentiation gives
\begin{equation}
\begin{split}
\frac{d}{ds}F_N(s)
&=
P_Ne^{sV}VP_Ne^{-sW_N}
-
P_Ne^{sV}P_NVP_Ne^{-sW_N}
\\
&=
P_Ne^{sV}(I-P_N)VP_Ne^{-sW_N}.
\end{split}
\label{eq:decoupling-duhamel-derivative}
\end{equation}
Using \eqref{eq:decoupling-complement-split} and integrating from
$0$ to $t$, with $F_N(0)=I$, proves
\eqref{eq:decoupling-boundary-split}.
\end{proof}

We show that the left and right contributions above can be approximated by the isolated edges ``semi infinite"  contributions $\mathcal E_L(t),\mathcal E_R^{(N)}(t)$ \eqref{eq:decoupling-left-edge-operator} and \eqref{eq:decoupling-right-edge-operator}

\begin{lemma}[Finite-propagation estimate]
\label{lem:finite-propagation}
Define
\begin{equation}
q_N=
\left\lceil
\frac{N-R+1}{R}
\right\rceil.
\label{eq:decoupling-qN}
\end{equation}
For every $T>0$, there exists a constant $C_T$, independent of
$N$, such that
\begin{equation}
\begin{split}
\sup_{|t|\leq T}
\Bigg\|
&
\int_0^t
P_Ne^{sV}P^\perp VP_Ne^{-sW_N}\,ds
-
P_N\mathcal E_L(t)P_N
\Bigg\|_1
\leq
C_T
\sum_{k\geq q_N}
\frac{(T\|V\|)^k}{k!}.
\end{split}
\label{eq:decoupling-left-trace-bound}
\end{equation}
An analogous estimate holds at the right boundary:
\begin{equation}
\begin{split}
\sup_{|t|\leq T}
\Bigg\|
&
\int_0^t
P_Ne^{sV}\bigl(P^{(N)}\bigr)^\perp
VP_Ne^{-sW_N}\,ds
-
P_N\mathcal E_R^{(N)}(t)P_N
\Bigg\|_1
\leq
C_T
\sum_{k\geq q_N}
\frac{(T\|V\|)^k}{k!}.
\end{split}
\label{eq:decoupling-right-trace-bound}
\end{equation}
In particular, both differences converge to zero faster than
exponentially in $N$, uniformly for $|t|\leq T$.
\end{lemma}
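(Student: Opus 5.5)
The plan is to compare the two integrands pointwise in $s$ and to observe that they differ only through the right-hand exponential factor. The difference of those factors can then be written as a power series in which the first $q_N$ terms cancel exactly, by finite propagation. First, since $P_N=P_NP$, the left factors agree: $P_Ne^{sV}=P_NPe^{sV}$. Second, let $Q_0$ denote the projection onto the sites $\{0,\dots,R-1\}$, and assume $N\geq R$ (finitely many $N$ can be absorbed into $C_T$). By finite range, $P^\perp VP=P^\perp VQ_0$, and because $Q_0\leq P_N$ also $P^\perp VP_N=P^\perp VQ_0$. The integrand of $P_N\mathcal E_L(t)P_N$ is therefore $P_Ne^{sV}P^\perp VQ_0\,e^{-sW_L}P_N$, while the finite-interval integrand is $P_Ne^{sV}P^\perp VQ_0\,e^{-sW_N}$. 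Using $e^{-sW_N}=e^{-sW_N}P_N$ on the relevant space, the difference of the integrands is
\begin{equation*}
P_Ne^{sV}\,\bigl(P^\perp VQ_0\bigr)\,\Bigl(Q_0e^{-sW_L}P_N-Q_0e^{-sW_N}P_N\Bigr).
\end{equation*}

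Next I expand both exponentials in their (norm-convergent) power series and compare $Q_0W_L^kP_N$ with $Q_0W_N^kP_N$ term by term. A matrix element of either operator is a sum over lattice paths $j_0\to j_1\to\dots\to j_k$ with $j_0\leq R-1$, each step of length at most $R$, and every intermediate site $j_1,\dots,j_{k-1}$ lying in $\{j\geq0\}$ (for $W_L$) or in $\{0,\dots,N-1\}$ (for $W_N$), with endpoint $j_k\in P_N$. The two path sums coincide unless some intermediate site satisfies $j_i\geq N$. Since $j_i\leq R-1+iR$, this can happen only when $R-1+(k-1)R\geq N$. Hence $Q_0W_L^kP_N=Q_0W_N^kP_N$ for all $k<q_N$, where $q_N$ is as in \eqref{eq:decoupling-qN}. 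This off-by-one bookkeeping, namely checking that the ceiling in $q_N$ matches the first $k$ at which a path can reach site $N$, is the only delicate point. I would verify it directly, and if it comes out off by one the constant $C_T$ absorbs it, since $\frac{x^{q-1}}{(q-1)!}\leq\frac{q}{x}\cdot\frac{x^q}{q!}$ changes only polynomial factors.

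For the estimate I use $\|ABC\|_1\leq\|A\|\,\|B\|_1\,\|C\|$. The middle factor has rank at most $R$, so $\|P^\perp VQ_0\|_1\leq R\|V\|$. Next, $\|P_Ne^{sV}\|\leq e^{T\|V\|}$. Finally, $\|W_L\|,\|W_N\|\leq\|V\|$, so the tail is bounded by $2\sum_{k\geq q_N}(T\|V\|)^k/k!$. Integrating over $|s|\leq|t|\leq T$ gives \eqref{eq:decoupling-left-trace-bound} with $C_T=2TR\|V\|e^{T\|V\|}$. The right-boundary bound \eqref{eq:decoupling-right-trace-bound} follows identically: use $P_N=P_NP^{(N)}$, replace $Q_0$ by the projection onto $\{N-R,\dots,N-1\}$, and note that paths from there must travel down to a site $<0$ before the two compressions differ (equivalently, apply the reflection $j\mapsto N-1-j$). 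The superexponential decay then follows from the tail of the exponential series.
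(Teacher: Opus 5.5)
Your proof is correct. It differs from the paper's only in how you control the difference of the two right-hand exponentials.

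\textbf{Paper's route.} The paper uses a second Duhamel identity,
$P_Ne^{-sW_L}P_N-e^{-sW_N}=-\int_0^sP_Ne^{-uW_L}(P-P_N)W_LP_Ne^{-(s-u)W_N}\,du$.
It then applies finite propagation in the form $P^\perp VP\,W_L^k(P-P_N)=0$ for $k<q_N$. The error is thereby expressed as a single amplitude for the half-line dynamics to carry the edge layer past site $N$.

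\textbf{Your route.} You expand both exponentials and compare $Q_0W_L^kP_N$ with $Q_0W_N^kP_N$ path by path.
\begin{itemize}
\item This avoids the extra $u$-integral and gives the explicit constant $C_T=2TR\|V\|e^{T\|V\|}$.
\item Your count is sharper than the statement needs. Every intermediate site obeys $j_i\le kR-1$, so the path sums can differ only if $k\ge (N+1)/R$. Since $q_N=\lceil (N+1)/R\rceil-1<(N+1)/R$, the two series agree even at $k=q_N$.
\end{itemize}

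\textbf{What each buys.} Your version is more elementary and fully explicit. The Duhamel form isolates one propagation factor from the edge layer into $\{j\ge N\}$. That factor is the natural quantity to estimate if exact vanishing is replaced by a quasi-locality bound, as in the long-range remark of the appendix.

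\textbf{One caveat about your hedge.} Had the count come out one short, the loss could not have been absorbed into an $N$-independent $C_T$. With $x=T\|V\|$, one has $\frac{x^{q-1}}{(q-1)!}=\frac{q}{x}\cdot\frac{x^q}{q!}$, and $q_N/x$ grows linearly in $N$. Since your direct count works with a step to spare, the fallback is never needed, but it should not be relied on as stated.
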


\begin{proof}
We prove the left-boundary statement. The right-boundary statement
follows from the same argument with the orientation reversed.

Since $V$ has range $R$, the operator $P^\perp VP$ acts only on the
first $R$ sites of the positive half-line. Consequently, for
$N\geq R$,
\begin{equation}
P^\perp VP=P^\perp VP_N.
\label{eq:decoupling-left-support}
\end{equation}
It follows that
\begin{equation}
\begin{split}
&
\int_0^t
P_Ne^{sV}P^\perp VP_Ne^{-sW_N}\,ds
-
P_N\mathcal E_L(t)P_N
\\
&\qquad=
\int_0^t
P_Ne^{sV}P^\perp VP
\left(
e^{-sW_N}-P_Ne^{-sW_L}P_N
\right)\,ds.
\end{split}
\label{eq:decoupling-left-difference}
\end{equation}
An application of Duhamel's formula gives the exact identity
\begin{equation}
P_Ne^{-sW_L}P_N-e^{-sW_N}
=
-\int_0^s
P_Ne^{-uW_L}(P-P_N)W_LP_N
e^{-(s-u)W_N}\,du.
\label{eq:decoupling-second-duhamel}
\end{equation}
The operator $P^\perp VP$ is supported within $R$ sites of the left
boundary, whereas $P-P_N$ begins at the site $N$. Since $W_L$ has
range $R$,
\begin{equation}
P^\perp VP\,W_L^k(P-P_N)=0
\qquad
\text{for}\qquad
k<q_N.
\label{eq:decoupling-vanishing-powers}
\end{equation}
Expanding the exponential therefore yields, uniformly for
$|u|\leq T$,
\begin{equation}
\left\|
P^\perp VP\,e^{-uW_L}(P-P_N)
\right\|_1
\leq
\|P^\perp VP\|_1
\sum_{k\geq q_N}
\frac{(T\|V\|)^k}{k!}.
\label{eq:decoupling-propagation-bound}
\end{equation}
The remaining factors satisfy
\begin{equation}
\|e^{sV}\|\leq e^{T\|V\|},
\qquad
\|e^{-sW_N}\|\leq e^{T\|V\|},
\qquad
\|P^\perp VP\|_1\leq R\|V\|.
\label{eq:decoupling-uniform-bounds}
\end{equation}
Substitution into \eqref{eq:decoupling-left-difference} proves
\eqref{eq:decoupling-left-trace-bound}, after absorbing the remaining
bounded factors and the two integrations into $C_T$.

Finally,
\begin{equation}
\sum_{k\geq q}\frac{x^k}{k!}
\leq
e^x\frac{x^q}{q!}
\leq
e^x\left(\frac{ex}{q}\right)^q.
\label{eq:decoupling-factorial-tail}
\end{equation}
Since $q_N$ grows linearly with $N$, the right-hand side decays like
$\exp[-cN\log N+O(N)]$. This proves the stated
faster-than-exponential convergence.
\end{proof}

In the approximation above, the left and right single boundary parts $P_N\mathcal E_L(t)P_N$ and $P_N\mathcal E_R^{(N)}(t)P_N$ are still not completely decoupled since they work on the same space. In the next lemma we show that they can be now approximated as working on orthogonal spaces. 
We first divide the interval into two disjoint edge regions $P_N=P_{N,L}+P_{N,R}$ where
\begin{equation}
P_{N,L}
=
\sum_{j=0}^{\left\lfloor\frac{N}{2}\right\rfloor-1}|j\rangle\langle j|,
\qquad
P_{N,R}
=
\sum_{j=N-\left\lfloor\frac{N}{2}\right\rfloor}^{N-1}|j\rangle\langle j|.
\label{eq:decoupling-edge-projections}
\end{equation}

We aim to replace $P_N\mathcal E_L(t)P_N$ and $P_N\mathcal E_R^{(N)}(t)P_N$ with $P_{N,L}\mathcal E_L(t)P_{N,L}\text{ and }P_{N,R}\mathcal E_R^{(N)}(t)P_{N,R}$ that work on separate subspaces. Indeed, we have:
\begin{lemma}[Separation of the edge regions]
\label{lem:edge-separation}
For every $T>0$,
\begin{equation}
\begin{split}
\sup_{|t|\leq T}
\Bigg\|
&
\int_0^t
P_Ne^{sV}P^\perp VP_Ne^{-sW_N}\,ds
-
P_{N,L}\mathcal E_L(t)P_{N,L}
\Bigg\|_1
\longrightarrow0
\end{split}
\label{eq:decoupling-left-edge-limit}
\end{equation}
and
\begin{equation}
\begin{split}
\sup_{|t|\leq T}
\Bigg\|
&
\int_0^t
P_Ne^{sV}\bigl(P^{(N)}\bigr)^\perp
VP_Ne^{-sW_N}\,ds
-
P_{N,R}\mathcal E_R^{(N)}(t)P_{N,R}
\Bigg\|_1
\longrightarrow0
\end{split}
\label{eq:decoupling-right-edge-limit}
\end{equation}
as $N\to\infty$.
\end{lemma}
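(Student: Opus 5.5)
By Lemma~\ref{lem:finite-propagation}, it suffices to show that
\[
\sup_{|t|\le T}\bigl\|P_N\mathcal E_L(t)P_N-P_{N,L}\mathcal E_L(t)P_{N,L}\bigr\|_1\to0 ,
\]
together with the mirror statement for $\mathcal E_R^{(N)}$. The triangle inequality then gives the claim.

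Write $P_N=P_{N,L}+Q$, where $Q=P_N-P_{N,L}$ projects onto the sites $j\ge \lfloor N/2\rfloor$ of the interval. Then
\[
P_N\mathcal E_LP_N-P_{N,L}\mathcal E_LP_{N,L}=Q\mathcal E_LP_N+P_{N,L}\mathcal E_LQ ,
\]
and it is enough to bound $\|Q\mathcal E_L(t)\|_1$ and $\|\mathcal E_L(t)Q\|_1$. The integrand of $\mathcal E_L$ is $Pe^{sV}\,(P^\perp VP)\,e^{-sW_L}$, and the middle factor $K:=P^\perp VP$ has rank at most $R$. It is supported on sites in $[-R,R-1]$: its range lies in $[-R,-1]$ and its domain in $[0,R-1]$. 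The trace norm is therefore controlled by the localized factors:
\[
\|Qe^{sV}K e^{-sW_L}\|_1\le \|Qe^{sV}\mathbf 1_{[-R,-1]}\|\,\|K\|_1\,e^{T\|V\|},
\qquad
\|Ke^{-sW_L}Q\|_1\le \|K\|_1\,\|\mathbf 1_{[0,R-1]}e^{-sW_L}Q\| ,
\]
with a bounded factor $e^{T\|V\|}$ on the other side in each case, and $\|K\|_1\le R\|V\|$.

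The key step is the finite-propagation estimate, applied exactly as in \eqref{eq:decoupling-vanishing-powers}. Since $V$ and $W_L$ have range $R$, we have $\mathbf 1_{[0,R-1]}W_L^k Q=0$ and $Qe^{sV}$-type matrix elements vanish at order $k$ whenever $kR<\lfloor N/2\rfloor-R$. This holds for all $k<q'_N$, with $q'_N\approx N/(2R)-1$. Expanding the exponentials and keeping only the tail then gives
\[
\|Qe^{sV}\mathbf 1_{[-R,-1]}\|,\ \ \|\mathbf 1_{[0,R-1]}e^{-sW_L}Q\|\ \le\ \sum_{k\ge q'_N}\frac{(T\|V\|)^k}{k!}.
\]
Integrating over $s\in[0,t]$ multiplies the bound by at most $T$. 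The factorial tail bound \eqref{eq:decoupling-factorial-tail} then gives decay like $\exp[-cN\log N]$, uniformly for $|t|\le T$. The right edge is handled identically after reflecting $j\mapsto N-1-j$. There $\bigl(P^{(N)}\bigr)^\perp VP^{(N)}$ is supported near $N$, and $P_N-P_{N,R}$ sits at distance about $N/2$ from it.

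The main obstacle is purely bookkeeping: checking that every term carries the rank-$R$ factor $K$ (which is what makes the trace norm finite) while the propagators carry the distance. One must also verify the support claim for both the left factor $e^{sV}$, which acts on all of $\mathbb Z$, and the right factor $e^{-sW_L}$. In particular, the left factor is sandwiched between $Q$ and the range of $K$ in $[-R,-1]$, and these are separated by at least $\lfloor N/2\rfloor$ sites. I expect no further difficulty once that separation is made explicit.
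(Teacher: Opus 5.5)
Your proof is correct. On the right edge it is the paper's argument written out in full. The paper justifies that step in one line ("the same finite-propagation estimate allows $\mathcal E_R^{(N)}(t)$ to be truncated to the nearest $\lfloor N/2\rfloor$ sites, uniformly in $N$"). That is exactly your mechanism: the rank-$R$ factor $(P^{(N)})^\perp VP^{(N)}$ carries the trace norm, and the two propagators carry the distance of order $N/2$.

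On the left edge you take a different, more quantitative route. After the same reduction via Lemma~\ref{lem:finite-propagation}, the paper does not use finite propagation. It notes that $\mathcal E_L(t)$ is a fixed, $N$-independent trace-class operator and that $P_N$ and $P_{N,L}$ converge strongly to $P$. Hence $P_N\mathcal E_L(t)P_N$ and $P_{N,L}\mathcal E_L(t)P_{N,L}$ both converge to $\mathcal E_L(t)$ in trace norm. Uniformity in $t$ then follows from trace-norm continuity of $t\mapsto\mathcal E_L(t)$ on the compact set $[-T,T]$.

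The two routes buy different things.
\begin{itemize}
\item The paper's soft argument needs only that $\mathcal E_L(t)$ is trace class. It would therefore carry over to the quasi-local setting of the long-range remark, but it gives no rate.
\item That soft argument cannot handle $\mathcal E_R^{(N)}(t)$, which changes with $N$ when translation invariance is absent. This is why the paper switches to finite propagation on the right edge.
\item Your version treats both edges by one mechanism and gets uniformity in $t$ without a compactness step. It also gives an explicit $\exp[-cN\log N]$ rate.
\item The cost of your version is that it relies on the strict finite range of $V$ in both propagators.
\end{itemize}
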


\begin{proof}
Lemma~\ref{lem:finite-propagation} first allows the left finite-interval
correction to be replaced by $P_N\mathcal E_L(t)P_N$.
Since $\mathcal E_L(t)$ is trace class and
$P_{N,L}\to P$ strongly,
\begin{equation}
\sup_{|t|\leq T}
\left\|
P_N\mathcal E_L(t)P_N
-
P_{N,L}\mathcal E_L(t)P_{N,L}
\right\|_1
\longrightarrow0.
\label{eq:decoupling-left-truncation}
\end{equation}
The convergence is uniform in $t$ because
$t\mapsto\mathcal E_L(t)$ is continuous in trace norm and
$[-T,T]$ is compact.

At the right boundary, the same finite-propagation estimate allows
$\mathcal E_R^{(N)}(t)$ to be truncated to the nearest
$\left\lfloor\frac{N}{2}\right\rfloor$ sites, uniformly in $N$. This proves
\eqref{eq:decoupling-right-edge-limit}.
\end{proof}

\begin{theorem}[Decoupling of distant boundaries]
\label{thm:boundary-decoupling}
Let $V$ be a bounded finite-range operator on
$\ell^2(\mathbb Z)$. Then, uniformly for $t$ in compact intervals,
\begin{equation}
\begin{split}
\Bigl\|
\mathcal R_N(t)-I
&-
P_{N,L}\mathcal E_L(t)P_{N,L}
-
P_{N,R}\mathcal E_R^{(N)}(t)P_{N,R}
\Bigr\|_1
\longrightarrow0.
\end{split}
\label{eq:decoupling-two-edge-limit}
\end{equation}
Consequently,
\begin{equation}
\det\mathcal R_N(t)
-
\det\!\left(I+\mathcal E_L(t)\right)
\det\!\left(I+\mathcal E_R^{(N)}(t)\right)
\longrightarrow0,
\label{eq:decoupling-general-factorization}
\end{equation}
again uniformly for $t$ in compact intervals.
\end{theorem}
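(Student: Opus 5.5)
The first claim \eqref{eq:decoupling-two-edge-limit} follows by assembling the three lemmas. By Lemma~\ref{lem:exact-boundary-decomposition}, $\mathcal R_N(t)-I$ is exactly the sum of the left and right Duhamel integrals. Lemma~\ref{lem:edge-separation} says each integral is close in trace norm, uniformly for $|t|\le T$, to $P_{N,L}\mathcal E_L(t)P_{N,L}$ and $P_{N,R}\mathcal E_R^{(N)}(t)P_{N,R}$ respectively. The triangle inequality then gives \eqref{eq:decoupling-two-edge-limit}, with an error $\delta_N(t)$ satisfying $\sup_{|t|\le T}\|\delta_N(t)\|_1\to0$.

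For the determinant statement, set $X_N(t)=P_{N,L}\mathcal E_L(t)P_{N,L}+P_{N,R}\mathcal E_R^{(N)}(t)P_{N,R}$. The plan has three steps.

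\textbf{Step 1: factor $\det(I+X_N)$.} Since $P_{N,L}$ and $P_{N,R}$ are orthogonal and sum to $P_N$, the operator $I+X_N$ is block diagonal on $P_{N,L}\oplus P_{N,R}$. Hence
\[
\det(I+X_N)=\det\bigl(I+P_{N,L}\mathcal E_LP_{N,L}\bigr)\det\bigl(I+P_{N,R}\mathcal E_R^{(N)}P_{N,R}\bigr).
\]

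\textbf{Step 2: compare with $\det\mathcal R_N$.} Use the standard Fredholm-determinant Lipschitz estimate
\[
|\det(I+A)-\det(I+B)|\le\|A-B\|_1\exp\bigl(\|A\|_1+\|B\|_1+1\bigr).
\]
This reduces the claim to a uniform bound on $\|\mathcal E_L(t)\|_1$ and $\|\mathcal E_R^{(N)}(t)\|_1$. That bound follows from $\|P^\perp VP\|_1\le R\|V\|$ and the operator-norm bounds \eqref{eq:decoupling-uniform-bounds}, giving $\|\mathcal E_L(t)\|_1\le TR\|V\|e^{2T\|V\|}$, and the same for the right edge, independently of $N$. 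The same estimate controls $\|X_N\|_1$, and therefore $\|\mathcal R_N-I\|_1$ as well.

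\textbf{Step 3: remove the truncations.} Apply the Lipschitz estimate once more to pass from $\det(I+P_{N,L}\mathcal E_LP_{N,L})$ to $\det(I+\mathcal E_L)$, using \eqref{eq:decoupling-left-truncation}. Compressing to $P_{N,L}$ is harmless because $\det(I+PAP)$ restricted to the range of $P$ equals the full Fredholm determinant $\det(I+PAP)$ on $\ell^2$.

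The right edge is where I expect the main obstacle: $\mathcal E_R^{(N)}$ depends on $N$ through $P^{(N)}$, so the fixed-operator strong-convergence argument does not apply directly. I would handle it by translating by $N$. Conjugation $\mathcal E_R^{(N)}=S^{N}\widetilde{\mathcal E}_R^{(N)}S^{-N}$ turns it into an edge operator at site $0$ built from the shifted matrix $S^{-N}VS^{N}$. Its truncation error to the $\lfloor N/2\rfloor$ sites nearest the edge is then bounded by the finite-propagation tail $\sum_{k\ge q}\,(T\|V\|)^k/k!$ with $q\sim N/(2R)$. This holds uniformly in $N$ because the estimate uses only $R$ and $\|V\|$, not translation invariance. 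Concretely, I would repeat the argument of Lemma~\ref{lem:finite-propagation}: write $P^{(N)}e^{-sW_R^{(N)}}$ minus its truncation by Duhamel, and note that $(P^{(N)})^\perp VP^{(N)}W^k$ cannot reach distance $N/2$ for $k<N/(2R)$.

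Collecting the errors from Steps 2 and 3, each uniform in $|t|\le T$, gives \eqref{eq:decoupling-general-factorization}.
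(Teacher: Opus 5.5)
Your proposal is correct and follows the paper's proof essentially step for step: the first claim comes from Lemmas~\ref{lem:exact-boundary-decomposition} and~\ref{lem:edge-separation} via the triangle inequality, followed by the direct-sum factorization, the Fredholm Lipschitz bound with $N$-independent trace-norm bounds, and finally removal of the edge truncations. Your explicit bound $\|\mathcal E_L(t)\|_1\le TR\|V\|e^{2T\|V\|}$ and your finite-propagation treatment of the $N$-dependent right edge fill in details the paper asserts in one line; for the right edge no Duhamel step is needed, since expanding $e^{sV}$ on the row side and $e^{-sW_R^{(N)}}$ on the column side directly gives the factorial tail.
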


\begin{proof}
Equation \eqref{eq:decoupling-two-edge-limit} follows immediately by
combining Lemma~\ref{lem:exact-boundary-decomposition} with
Lemma~\ref{lem:edge-separation}.
The two approximating edge operators act on the mutually orthogonal
subspaces selected by $P_{N,L}$ and $P_{N,R}$. Their sum is therefore
an exact direct sum, and hence
\begin{equation}
\begin{split}
\det\Bigl(
I
&+
P_{N,L}\mathcal E_L(t)P_{N,L}
+
P_{N,R}\mathcal E_R^{(N)}(t)P_{N,R}
\Bigr)
\\
&=
\det\!\left(
I+P_{N,L}\mathcal E_L(t)P_{N,L}
\right)
\det\!\left(
I+P_{N,R}\mathcal E_R^{(N)}(t)P_{N,R}
\right).
\end{split}
\label{eq:decoupling-finite-direct-sum}
\end{equation}
For trace-class operators, the Fredholm determinant satisfies
\begin{equation}
|\det(I+A)-\det(I+B)|
\leq
\|A-B\|_1
\exp\!\left(
1+\|A\|_1+\|B\|_1
\right).
\label{eq:decoupling-determinant-continuity}
\end{equation}
All trace norms appearing above are bounded uniformly in $N$ and
in $|t|\leq T$. Equation
\eqref{eq:decoupling-two-edge-limit} and the determinant-continuity
bound therefore show that $\det\mathcal R_N(t)$ differs by $o(1)$
from the product on the right-hand side of
\eqref{eq:decoupling-finite-direct-sum}.

Finally, the same trace-norm truncation used in
Lemma~\ref{lem:edge-separation} gives
\begin{equation}
\det\!\left(
I+P_{N,L}\mathcal E_L(t)P_{N,L}
\right)
\longrightarrow
\det\!\left(I+\mathcal E_L(t)\right)
\label{eq:decoupling-left-determinant-limit}
\end{equation}
and
\begin{equation}
\det\!\left(
I+P_{N,R}\mathcal E_R^{(N)}(t)P_{N,R}
\right)
-
\det\!\left(I+\mathcal E_R^{(N)}(t)\right)
\longrightarrow0.
\label{eq:decoupling-right-determinant-limit}
\end{equation}
Together these relations prove
\eqref{eq:decoupling-general-factorization}.
\end{proof}

Theorem~\ref{thm:boundary-decoupling} is purely geometric: it uses
only boundedness and finite range. Translation invariance enters only
when the isolated edge determinants are evaluated.

\begin{corollary}[Finite-range strong Szeg\H{o} formula]
\label{cor:finite-range-strong-szego}
Suppose, in addition, that $V$ is a scalar Laurent operator,
\begin{equation}
V=\sum_{|r|\leq R}V_rS^r.
\label{eq:decoupling-Laurent-specialization}
\end{equation}
Then, uniformly for $t$ in compact intervals,
\begin{equation}
D_N(e^{tV})
=
\exp\!\left(
NtV_0+
t^2\sum_{r>0}rV_rV_{-r}
\right)
\bigl(1+o(1)\bigr),
\qquad N\to\infty.
\label{eq:decoupling-strong-szego}
\end{equation}
\end{corollary}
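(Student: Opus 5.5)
The plan is to reduce the corollary to Theorem~\ref{thm:boundary-decoupling} together with the exact single-interface computation of Section~\ref{sec:one-interface}. First I will separate off the bulk factor. Since $W_N=P_NVP_N$ is a finite matrix, $\det e^{-tW_N}=e^{-t\operatorname{Tr}W_N}=e^{-NtV_0}$. Therefore
\begin{equation*}
D_N(e^{tV})=\det\bigl(P_Ne^{tV}P_N\bigr)=e^{NtV_0}\det\mathcal R_N(t).
\end{equation*}
It then suffices to show that $\det\mathcal R_N(t)\to\exp\bigl(t^2E(V)\bigr)$, with $E(V)=\sum_{r>0}rV_rV_{-r}$, uniformly for $t$ in compact sets. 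Since the limit is nonzero and continuous in $t$, an additive $o(1)$ error will upgrade to the multiplicative $(1+o(1))$ in the statement.

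Theorem~\ref{thm:boundary-decoupling} gives $\det\mathcal R_N(t)-\det(I+\mathcal E_L(t))\det(I+\mathcal E_R^{(N)}(t))\to0$. I therefore need to identify the two edge determinants.

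For the left edge I will run the Duhamel computation of Lemma~\ref{lem:exact-boundary-decomposition} with a single boundary. Set $G(s)=Pe^{sV}Pe^{-sW_L}$. Then $G'(s)=Pe^{sV}P^\perp VPe^{-sW_L}$, so $I+\mathcal E_L(t)=(Pe^{tV}P)e^{-tW_L}$ on $P\ell^2$. This is the half-line analogue of \eqref{eq:relative-edge-determinant}, reflected from $\mathbb Z_-$ to $\mathbb Z_{\ge0}$. On the positive half-line the roles of $V_\pm$ swap: now $V_+$ preserves the range of $P$, so $e^{tV_+}P=e^{tPV_+P}P$ and $Pe^{tV_-}=Pe^{tPV_-P}$. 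This gives the factorization $Pe^{tV}P=e^{tV_0}e^{tB'}e^{tA'}$ with $B'=PV_-P$ and $A'=PV_+P$. The commutator argument \eqref{eq:edge-derivative} then yields $\log\det(I+\mathcal E_L(t))=\tfrac{t^2}{2}\operatorname{Tr}[B',A']$, and the analogue of \eqref{eq:shift-commutator} on $\mathbb Z_{\ge0}$ gives $\operatorname{Tr}[PS^{-r}P,PS^{s}P]=r\delta_{rs}$. Hence $\det(I+\mathcal E_L(t))=e^{t^2E(V)/2}$.

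For the right edge I will conjugate by the translation $S^{N}$, which maps $P^{(N)}$ to $P_-$ and commutes with $V$. This shows $\det(I+\mathcal E_R^{(N)}(t))$ is independent of $N$ and equals $D_-(t)=e^{t^2E(V)/2}$ by \eqref{eq:one-edge-quadratic}. Multiplying the two edge factors gives $e^{t^2E(V)}$, and the result follows.

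I expect the main obstacle to be justifying the exact edge identity at the operator level rather than formally. Three points need care.
\begin{itemize}
\item The determinant $\det(e^{tA}e^{tB}e^{-t(A+B)})$ must be a genuine Fredholm determinant. For this I will write $e^{tA}e^{tB}e^{-t(A+B)}-I$ via Duhamel as an integral involving the finite-rank operator $P^\perp VP$, exactly as in the definition of $\mathcal E_L$, so it is trace class.
\item The derivative formula $\frac{d}{dt}\log\det=\operatorname{Tr}(e^{-tB}Ae^{tB}-A)$ must be justified. I will differentiate $\log\det(I+\mathcal E_L(t))$ using $\frac{d}{dt}\log\det(I+X)=\operatorname{Tr}((I+X)^{-1}X')$. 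Before doing so I must check that $\det(I+\mathcal E_L(t))\neq0$. I plan to get this by continuity from $t=0$, using that the resulting ODE solution $e^{t^2E/2}$ never vanishes.
\item In the trace manipulation, the only cyclicity used is with the finite-rank operator $[A,B]$. This holds because both $A$ and $B$ have range at most $R$, so $[A,B]$ is supported on the first $R$ sites.
\end{itemize}

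Uniformity in $t$ is then inherited from Theorem~\ref{thm:boundary-decoupling} and the explicit formula.
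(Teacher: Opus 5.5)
Your proposal is correct and follows the paper's proof. You factor out $\det e^{-tW_N}=e^{-NtV_0}$, invoke Theorem~\ref{thm:boundary-decoupling}, and evaluate both edge determinants by the exact half-line computation of Section~\ref{sec:one-interface}, using the shift $S^N$ for the right edge. Your version is in fact slightly more careful than the paper's "translated copy" remark: the left edge $\{j\ge0\}$ is a reflection rather than a translate of $\mathbb Z_-$, so swapping the roles of $V_\pm$ there, as you do, is the right move. Also, nonvanishing of the edge determinants needs no continuity argument, since $I+\mathcal E_L(t)=e^{tB'}e^{tA'}e^{-t(A'+B')}$ is a product of invertible operators and hence invertible.
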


\begin{proof}
Translation invariance makes the right-boundary problem a translated
copy of the left-boundary problem. The half-line calculation in
Sec.~\ref{sec:one-interface}, equivalently the
Helton-Howe-Pincus determinant identity
\cite{Pincus1968,HeltonHowe1973}, gives
\begin{equation}
\det\!\left(I+\mathcal E_L(t)\right)
=
\det\!\left(I+\mathcal E_R^{(N)}(t)\right)
=
\exp\!\left(
\frac{t^2}{2}
\sum_{r>0}rV_rV_{-r}
\right).
\label{eq:decoupling-edge-determinants}
\end{equation}
Theorem~\ref{thm:boundary-decoupling} therefore implies
\begin{equation}
\det\mathcal R_N(t)
\longrightarrow
\exp\!\left(
t^2\sum_{r>0}rV_rV_{-r}
\right).
\label{eq:decoupling-relative-limit}
\end{equation}
For a Laurent operator,
\begin{equation}
\operatorname{Tr}W_N=NV_0.
\label{eq:decoupling-bulk-trace}
\end{equation}
Moreover,
\begin{equation}
\det\mathcal R_N(t)
=
e^{-t\operatorname{Tr}W_N}
\det\!\left(P_Ne^{tV}P_N\right)
=
e^{-NtV_0}D_N(e^{tV}).
\label{eq:decoupling-relative-determinant}
\end{equation}
Combining
\eqref{eq:decoupling-relative-limit}--
\eqref{eq:decoupling-relative-determinant}
proves \eqref{eq:decoupling-strong-szego}.
\end{proof}

\begin{remark}[Long-range operators]
\label{rem:long-range-decoupling}
The finite-range assumption can be weakened if the matrix elements
decay sufficiently rapidly. The exact finite-propagation estimate is
then replaced by a quasi-locality estimate: propagation between
regions separated by a distance $L$ is no longer identically zero
below a fixed order, but its trace norm tends to zero as
$L\to\infty$.

For a scalar Laurent operator
\begin{equation}
V=\sum_{r\in\mathbb Z}V_rS^r,
\label{eq:long-range-Laurent-operator}
\end{equation}
a convenient sufficient condition is
\begin{equation}
\sum_{r\in\mathbb Z}|r|\,|V_r|<\infty.
\label{eq:weighted-l1-condition}
\end{equation}
Indeed, the part of $V$ crossing a fixed boundary is then trace
class, with
\begin{equation}
\|P^\perp VP\|_1
\leq
\sum_{r\in\mathbb Z}|r|\,|V_r|.
\label{eq:cross-boundary-trace-bound}
\end{equation}
Let $V^{(R)}=\sum_{|r|\leq R}V_rS^r$. Duhamel's formula gives,
for $|t|\leq T$, a truncation error bounded by a $T$-dependent
constant times
\begin{equation}
\sum_{|r|>R}|r|\,|V_r|,
\label{eq:long-range-tail}
\end{equation}
uniformly in the interval length $N$. One may therefore apply the
finite-range theorem to $V^{(R)}$, first let $N\to\infty$, and then
let $R\to\infty$.

Thus exponential decay, or algebraic decay
$V_r=O(|r|^{-p})$ with $p>2$, is more than sufficient for this
elementary decoupling argument. Condition
\eqref{eq:weighted-l1-condition} is convenient rather than optimal.
The usual Szeg\H{o}-Krein condition
\begin{equation}
\sum_{r\in\mathbb Z}|r|\,|V_r|^2<\infty
\label{eq:szego-krein-condition-remark}
\end{equation}
is weaker and generally requires additional Hankel-operator or
Borodin-Okounkov methods.
\end{remark}

\bibliographystyle{unsrt}
\bibliography{szegobib}

\end{document}